\newtheorem{theorem}{Theorem}
\newtheorem{lemma}{Lemma}
\newtheorem{proposition}{Proposition}
\newtheorem{corollary}{Corollary}
\newtheorem{definition}{Definition}
\DeclareMathOperator*{\argmax}{arg\,max}
\DeclareMathOperator*{\E}{\mathbb{E}}
\title{Asymptotic Existence of Fair Divisions for Groups}
\author{
Pasin Manurangsi\\UC Berkeley
\and
Warut Suksompong\\Stanford University
}
\begin{document}

\date{}
\maketitle

\begin{abstract}
The problem of dividing resources fairly occurs in many practical situations and is therefore an important topic of study in economics. In this paper, we investigate envy-free divisions in the setting where there are multiple players in each interested party. While all players in a party share the same set of resources, each player has her own preferences. Under additive valuations drawn randomly from probability distributions, we show that when all groups contain an equal number of players, a welfare-maximizing allocation is likely to be envy-free if the number of items exceeds the total number of players by a logarithmic factor. On the other hand, an envy-free allocation is unlikely to exist if the number of items is less than the total number of players. In addition, we show that a simple truthful mechanism, namely the random assignment mechanism, yields an allocation that satisfies the weaker notion of approximate envy-freeness with high probability.
\end{abstract}

\section{Introduction}

Dividing resources among interested parties in a fair manner is a problem that commonly occurs in real-world situations and is consequently of fundamental importance. Countries negotiate over international issues, as Egypt and Israel did in 1978 over interests in the Sinai Peninsula \cite{BramsTo96} and the U.S. and Panama in 1994 over those in the Panama Canal. Likewise, divorced couples negotiate over their marital property, airlines over flight routes, and Internet clients over bandwidth and storage space. On a smaller scale, typical everyday tasks involving fair division include distributing household tasks, splitting a taxi fare, and sharing apartment rent \cite{GoldmanPr14}. Given its far-reaching and often critical applications, it should not come as a surprise that fair division has long been a popular topic of study in economics \cite{DubinsSp61,Foley67,Steinhaus48,Stromquist80,Varian74}.
%, and more recently in computer science \cite{BouveretLa08,BouveretLe14,BramsFeMo12,ChevaleyreDuEn06,CohlerLaPa11,DomshlakHuKa11,KarpKaPr14,KashPrSh13,ZivanDuOk10}.

To reason about fair division, we must carefully define what we mean for a division to be ``fair''. Several notions of fairness have been proposed in the literature. For example, a division is said to be \emph{proportional} if every player values her allocation at least $1/n$ times her value for the whole set of items, where $n$ denotes the number of players. Another commonly-used notion of fairness, and one that we will work with throughout the paper, is that of \emph{envy-freeness}. A division of a set of items is said to be \emph{envy-free} if each player values the set of items that she receives at least as much as the set of items that any other player receives. When utilities are additive, an envy-free division is proportional and moreover satisfies another important fairness notion called the \emph{maximin share criterion}.\footnote{See, e.g., \cite{BouveretLe14,Budish11} for the definition of the maximin share criterion.}
While procedures for finding envy-free divisions have been proposed \cite{BramsKiKl12,BramsKiKl14,HaakeRaSu02}, an envy-free division does not necessarily exist in arbitrary settings. This can most easily be seen in the case where there are two players and one item that both players value positively, or more generally when the number of players exceeds the number of items and players have positive values for items.

The fair division literature has for the most part assumed that each interested party consists of a single player (or equivalently, several players represented by a single common preference). However, this assumption is too restrictive for many practical situations. Indeed, an outcome of a negotiation between countries may have to be approved by members of the cabinets of each country. Since valuations of outcomes are usually subjective, one can easily imagine a situation in which one member of a cabinet of a country thinks that the division is fair while another member disagrees. Similarly, in a divorce case, different members of the family on the husband side and the wife side may have varying opinions on a proposed settlement. Another example is a large company or university that needs to divide its resources among competing groups of agents (e.g., departments in a university), since the agents in each group can have misaligned interests. Indeed, the professors who perform theoretical research may prefer more whiteboards and open space in the department building, while those who engage in experimental work are more likely to prefer laboratories. 

In this paper, we study envy-free divisions when there are multiple players in each group. Every player has her own preferences, and players in the same group can have very different preferences. In this generalized setting, we consider a division to be envy-free if every player values the set of items assigned to her group at least as much as that assigned to any other group.
%This brings together two problems that can be seen as contrasting: preference aggregation, where the players share a collective outcome, and resource allocation, where each of them receives a different outcome.
% COMMENT: The previous sentence is a bit confusing for me. May be rephrasing it?

\subsection{Our Contributions}

In Section \ref{sec:asymptotic}, we investigate the asymptotic existence and non-existence of envy-free divisions using a probabilistic model, previously used in the setting with one player per group \cite{DickersonGoKa14}. We show that under additive valuations and other mild technical conditions, when all groups contain an equal number of players, an envy-free division is likely to exist if the number of goods exceeds the total number of players by a logarithmic factor, no matter whether the players are distributed into several groups of small size or few groups of large size (Theorem \ref{thm:existence}). In particular, any allocation that maximizes social welfare is likely to be envy-free. In addition, when there are two groups with possibly unequal numbers of players and the distribution on the valuation of each item is symmetric, an envy-free division is likely to exist if the number of goods exceeds the total number of players by a logarithmic factor as well (Theorem \ref{thm:existencesymmetric}). Although it might not be surprising that a welfare-maximizing allocation is envy-free with high probability when there are sufficiently many items, we find the fact that only an extra logarithmic factor is required to be rather unexpected. Indeed, as the number of players in each group increases, it seems as though the independence between the preferences of each player would make it much harder to satisfy all of them simultaneously, since they all need to be allocated the same items. 
%Practically, the fact that we only need an extra logarithmic factor leads to a significant difference. For example, suppose that we are dividing items between 4-5 families or groups of workers in a company, each with 4-5 members. Our results imply that we only need a few dozens items rather than thousands to satisfy all of them. 

To complement our existence results, we show on the other hand that we cannot get away with a much lower number of items and still have an envy-free division with high probability. In particular, if the number of items is less than the total number of players by a superconstant factor, or if the number of items is less than the total number of players and the number of groups is large, we show that the probability that an envy-free division exists is low (Corollaries~\ref{cor:nonexistence1} and \ref{cor:nonexistence2}). This leaves the gap between asymptotic existence and non-existence of envy-free divisions at a mere logarithmic factor. 

While the techniques used to show asymptotic existence of envy-free divisions in Section \ref{sec:asymptotic} give rise to mechanisms that compute such divisions with high probability, these mechanisms are unfortunately not truthful. In other words, implementing these mechanisms in the presence of strategic players can lead to undesirable outcomes. In Section \ref{sec:existenceapprox}, we tackle the issue of truthfulness and show that a simple truthful mechanism, namely the random assignment mechanism, is \emph{$\alpha$-approximate envy-free} with high probability for any constant $\alpha\in[0,1)$ (Theorem \ref{thm:existenceapprox}). Approximate envy-freeness means that even though a player may envy another player in the resulting division, the values of the player for her own allocation and for the other player's allocation differ by no more than a multiplicative factor of $\alpha$. In other words, the player's envy is relatively small compared to her value for her own allocation. The number of items required to obtain approximate envy-freeness with high probability increases as we increase $\alpha$. Our result shows that it is possible to achieve truthfulness and approximate envy-freeness simultaneously in a wide range of random instances, and improves upon the previous result for the setting with one player per group \cite{AmanatidisBM16} in several ways.

\subsection{Related Work}

Our results in Section \ref{sec:asymptotic} can be viewed as generalizations of previous results by Dickerson et al. \cite{DickersonGoKa14}, who showed asymptotic existence and non-existence under a similar model but in a more limited setting where each group has only one player. In particular, these authors proved that under certain technical conditions on the probability distributions, an allocation that maximizes social welfare is envy-free with high probability if the number of items is larger than the number of players by a logarithmic factor. In fact, their result also holds when the number of players stay constant, as long as the number of items goes to infinity. Similarly, we show that a welfare-maximizing allocation is likely to be envy-free if the number of items exceeds the number of players by a logarithmic factor. While we require that the number of player per group goes to infinity, the number of groups can stay small, even constant. On the non-existence front, Dickerson et al. showed that if the utility for each item is independent and identically distributed across players, then envy-free allocations are unlikely to exist when the number of items is larger than the number of players by a linear fraction. On the other hand, our non-existence results apply to the regime where the number of items is smaller than the number of players. Note that while this regime is uninteresting in Dickerson et al.'s setting since envy-free allocations cannot exist, in our generalized setting an envy-free allocation can already exist when the number of items is at least the number of \emph{groups}.

Besides the asymptotic results on envy-free divisions, results of this type have also been shown for other fairness notions, including proportionality and the maximin share criterion. These two notions are weaker than envy-freeness when utilities are additive. Suksompong \cite{Suksompong16-2} showed that proportional allocations exist with high probability if the number of goods is a multiple of the number of players or if the number of goods grows asymptotically faster than the number of players. Kurokawa et al. \cite{KurokawaPrWa16} showed that if either the number of players or the number of items goes to infinity, then an allocation satisfying the maximin share criterion is likely to exist as long as each probability distribution has at least constant variance. Amanatidis et al. \cite{AmanatidisMaNi15} analyzed the rate of convergence for the existence of allocations satisfying the maximin share criterion when the utilities are drawn from the uniform distribution over the unit interval. Another common approach for circumventing the potential nonexistence of divisions satisfying certain fairness concepts, which we do not discuss in this paper, is by showing approximation guarantees for worst-case instances \cite{AmanatidisMaNi15,BarmanKr17,LiptonMaMo04,ProcacciaWa14,Suksompong17,Suksompong17-2}. In particular, Suksompong \cite{Suksompong17} investigated approximation guarantees for groups of agents using the maximin share criterion.

Finally, a model was recently introduced that incorporates the element of resource allocation for groups \cite{ManurangsiSu17,Suksompong16}. The model concerns the problem of finding a small ``agreeable'' subset of items, i.e., a small subset of items that a group of players simultaneously prefer to its complement. Nevertheless, in that model the preferences of only one group of players are taken into account, whereas in our work we consider the preferences of multiple groups at the same time.

\section{Preliminaries}
\label{sec:prelim}

Let a set $N$ of $n:=gn'$ players be divided into $g\geq 2$ groups $G_1,\ldots,G_g$ of $n'$ players each, and let $M:=\{1,2,\ldots,m\}$ denote the set of items. Let $g(i)$ be the index of the group containing player $i$ (i.e., player $i$ belongs to the group $G_{g(i)})$. Each item will be assigned to exactly one group, to be shared among the members of the group. We assume that each player $i\in N$ has a cardinal utility $u_i(j)$ for each item $j\in M$. We may suppose without loss of generality that $u_i(j)\in[0,1]$, since otherwise we can scale down all utilities by their maximum. We will also make a very common assumption that utilities are \emph{additive}, i.e., $u_i(M')=\sum_{j\in M'}u_i(j)$ for any player $i\in N$ and any subset of items $M'\subseteq M$. The \emph{social welfare} of an assignment is the sum of the utilities of all $n$ players from the assignment.

We are now ready to define the notion of envy-freeness. Denote the subsets of items that are assigned to the $g$ groups by $M_1,\ldots,M_g$, respectively.

\begin{definition}
Player $i$ in group $G_{g(i)}$ regards her allocation $M_{g(i)}$ as \emph{envy-free} if $u_i(M_{g(i)})\geq u_i(M_j)$ for every group $G_j\neq G_{g(i)}$. The assignment of the subsets $M_1,\ldots,M_g$ to the $g$ groups is called \emph{envy-free} if every player regards her allocation as envy-free.
\end{definition}

Next, we list two probabilistic results that will be used in our proofs. We first state the Chernoff bound, which gives us an upper bound on the probability that a sum of independent random variables is far away from its expected value.

\begin{lemma}[Chernoff bound] \label{lem:chernoff}
Let $X_1, \dots, X_r$ be independent random variables that are bounded in an interval $[0, 1]$, and let $S:=X_1 + \cdots + X_r$. We have
\begin{align*}
\Pr[S \geq (1 + \delta)\E[S]] &\leq \exp{\left(\frac{-\delta^2\E[S]}{3}\right)},
\end{align*}
and,
\begin{align*}
\Pr[S \leq (1 - \delta)\E[S]] &\leq \exp{\left(\frac{-\delta^2\E[S]}{2}\right)}
\end{align*}
for every $\delta \geq 0$.
\end{lemma}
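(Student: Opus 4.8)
The plan is to use the standard exponential-moment (Chernoff) method, which routes everything through Markov's inequality applied to $e^{tS}$ for a well-chosen parameter $t$. For the upper tail, fix any $t>0$ and write $\mu:=\E[S]$. Since $x\mapsto e^{tx}$ is increasing, $\Pr[S\geq(1+\delta)\mu]=\Pr[e^{tS}\geq e^{t(1+\delta)\mu}]$, and Markov's inequality bounds this by $e^{-t(1+\delta)\mu}\,\E[e^{tS}]$. The key structural gain is that by independence $\E[e^{tS}]=\prod_{i=1}^{r}\E[e^{tX_i}]$, so it suffices to control a single factor.

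First I would estimate one factor. Because $X_i\in[0,1]$ and $z\mapsto e^{tz}$ is convex, the function lies below the chord joining its endpoints on $[0,1]$, which gives the pointwise bound $e^{tX_i}\leq 1+(e^{t}-1)X_i$; taking expectations and then using $1+y\leq e^{y}$ yields $\E[e^{tX_i}]\leq 1+(e^{t}-1)\E[X_i]\leq\exp\!\big((e^{t}-1)\E[X_i]\big)$. Multiplying over $i$ collapses the product into $\E[e^{tS}]\leq\exp\!\big((e^{t}-1)\mu\big)$, so that $\Pr[S\geq(1+\delta)\mu]\leq\exp\!\big((e^{t}-1)\mu-t(1+\delta)\mu\big)$ for every $t>0$.

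The next step is to optimize the exponent over $t$. Minimizing $(e^{t}-1)-t(1+\delta)$ gives the choice $t=\ln(1+\delta)$, which produces the textbook form $\Pr[S\geq(1+\delta)\mu]\leq\big(e^{\delta}/(1+\delta)^{1+\delta}\big)^{\mu}$. It then remains to pass from this sharp but unwieldy expression to the stated bound $\exp(-\delta^2\mu/3)$, which reduces to the single-variable inequality $\delta-(1+\delta)\ln(1+\delta)\leq-\delta^2/3$. The lower-tail bound is entirely analogous: one repeats the argument with $e^{-tS}$ (equivalently $t<0$), applies the same convexity estimate, optimizes at $t=-\ln(1-\delta)$, and reduces the resulting $\big(e^{-\delta}/(1-\delta)^{1-\delta}\big)^{\mu}$ to $\exp(-\delta^2\mu/2)$ via $-\delta-(1-\delta)\ln(1-\delta)\leq-\delta^2/2$, where the better constant $2$ appears because the $\ln(1-\delta)$ term admits a cleaner Taylor estimate than $\ln(1+\delta)$.

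I expect the only genuine obstacle to be the last step of each tail: verifying the elementary calculus inequalities that convert the exact optimized bounds into the clean $\exp(-\delta^2\mu/3)$ and $\exp(-\delta^2\mu/2)$ forms, and in particular pinning down the precise constants and the range of $\delta$ over which they hold. Everything preceding that — Markov's inequality, the factorization from independence, and the convexity estimate — is routine. Since this is a classical result that is merely \emph{cited} rather than needed in a nonstandard form, an equally acceptable route would be to invoke a standard reference instead of reproducing the computation.
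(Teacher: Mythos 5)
The paper states this lemma without proof, citing it as a classical result, so the comparison here is against the standard argument you are reconstructing. Your skeleton --- Markov's inequality applied to $e^{tS}$, factorization of the moment generating function via independence, the chord bound $e^{tX_i} \leq 1 + (e^t - 1)X_i$ for $X_i \in [0,1]$, and optimization at $t = \ln(1+\delta)$ (resp.\ $t = -\ln(1-\delta)$) --- is exactly the right and standard route, and the lower-tail half closes completely: the inequality $-\delta - (1-\delta)\ln(1-\delta) \leq -\delta^2/2$ holds for all $\delta \in [0,1)$ (its left-minus-right side has derivative $\delta + \ln(1-\delta) \leq 0$ and vanishes at $0$), while for $\delta \geq 1$ the event $S \leq (1-\delta)\E[S]$ has probability at most $\Pr[S = 0] \leq e^{-\E[S]}$, so that tail is valid for every $\delta \geq 0$.

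The genuine gap is in the upper tail, at precisely the step you flagged as the ``only obstacle'': the inequality $\delta - (1+\delta)\ln(1+\delta) \leq -\delta^2/3$ is \emph{false} for large $\delta$. At $\delta = 10$ the left side is $10 - 11\ln 11 \approx -16.4$, while the right side is $-100/3 \approx -33.3$. Moreover, this is not a defect of your method but of the statement itself: the upper-tail bound with constant $3$ fails as a claim about all $\delta \geq 0$. Take $r$ independent Bernoulli random variables with success probability $1/r$, so $\E[S] = 1$; then $\Pr[S \geq 11] \to e^{-1}/11! \approx e^{-18.5}$ as $r \to \infty$, which exceeds $e^{-\delta^2\E[S]/3} = e^{-100/3}$ at $\delta = 10$. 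The standard fixes are to restrict the $\exp(-\delta^2\E[S]/3)$ form to $\delta \in [0,1]$ (your reduction inequality in fact holds up to $\delta \approx 1.8$), or to state the upper tail as $\exp\left(-\delta^2\E[S]/(2+\delta)\right)$, which is valid for all $\delta \geq 0$. This imprecision is inherited from the paper's formulation of the lemma rather than introduced by you, and it is harmless downstream: both places where the paper invokes the Chernoff bound (Theorems \ref{thm:existence} and \ref{thm:existenceapprox}) use $\delta < 1$.
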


Another lemma that we will use is the Berry-Esseen theorem. In short, it states that a sum of a sufficiently large number of independent random variables behaves similarly to a normal distribution. On the surface, this sounds like the central limit theorem. However, the Berry-Esseen theorem relies on a slightly stronger assumption and delivers a more concrete bound, which is required for our purposes.

\begin{lemma}[Berry-Esseen theorem~\cite{Berry41,Esseen42}] \label{lem:berry}
Let $X_1, \dots, X_r$ be $r$ independent and identically distributed random variables, each of which has mean $\mu$, variance $\sigma^2$, and third moment\footnote{The third moment of a random variable $X$ is defined as $\E[|X - \E[X]|^3]$.} $\rho$. Let $S:=X_1 + \cdots + X_r$. There exists an absolute constant $C_\mathit{BE}$ such that
\begin{align*}
\left|\Pr[S \leq x] - \Pr_{y \sim \mathcal{N}(\mu r, \sigma^2 r)}[y \leq x]\right| \leq \frac{\rho C_\mathit{BE}}{\sigma^3 \sqrt{r}}
\end{align*}
for every $x \in \mathbb{R}$.
Note that $\mathcal{N}(\mu r, \sigma^2 r)$ is the normal distribution with mean $\mu r$ and variance $\sigma^2 r$, i.e., its probability density function is $$f(x) = \frac{1}{\sigma \sqrt{2\pi r}}e^\frac{-(x - \mu r)^2}{2\sigma^2 r}.$$
\end{lemma}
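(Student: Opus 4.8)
The statement is the classical Berry--Esseen theorem, which the authors merely cite rather than reprove; nonetheless, the plan I would follow is the standard Fourier-analytic argument. The idea is to compare the characteristic function of the normalized sum with that of a Gaussian, and then to convert this pointwise comparison of characteristic functions into a uniform bound on the distance between the two cumulative distribution functions by means of Esseen's smoothing inequality.

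First I would reduce to a standardized setting. Setting $Y_k := (X_k - \mu)/\sigma$, each $Y_k$ has mean $0$, variance $1$, and third absolute moment $\beta := \rho/\sigma^3$. Writing $Z := (S - \mu r)/(\sigma\sqrt{r}) = \frac{1}{\sqrt{r}}\sum_{k=1}^r Y_k$, the claim becomes $\sup_x\left|\Pr[Z \le x] - \Phi(x)\right| \le C_{\mathit{BE}}\,\beta/\sqrt{r}$, where $\Phi$ denotes the standard normal CDF. Let $f(t) := \E[e^{itY_1}]$ be the common characteristic function of the $Y_k$; then $Z$ has characteristic function $f(t/\sqrt{r})^r$, whereas the standard normal has characteristic function $e^{-t^2/2}$.

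The next step is to invoke Esseen's smoothing inequality, which states that for any $T > 0$,
$$\sup_x \left|\Pr[Z \le x] - \Phi(x)\right| \le \frac{1}{\pi}\int_{-T}^{T}\left|\frac{f(t/\sqrt{r})^r - e^{-t^2/2}}{t}\right|\,dt + \frac{24}{\pi T}\,\|\Phi'\|_\infty,$$
where the boundary term arises from the uniform bound $\|\Phi'\|_\infty = 1/\sqrt{2\pi}$ on the density of the limit. This reduces the whole problem to estimating the numerator $\left|f(t/\sqrt{r})^r - e^{-t^2/2}\right|$ on the window $|t| \le T$ and then choosing $T$.

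The heart of the proof, and the step I expect to be the main obstacle, is the estimate of this characteristic-function difference. Using $\E[Y_1] = 0$ and $\E[Y_1^2] = 1$, a second-order Taylor expansion gives $f(u) = 1 - u^2/2 + R(u)$ with $|R(u)| \le \beta|u|^3/6$, and $e^{-u^2/2}$ admits the same quadratic principal part. Substituting $u = t/\sqrt{r}$ and applying the elementary telescoping bound $|a^r - b^r| \le r\,|a-b|\,\max(|a|,|b|)^{r-1}$, together with a uniform quadratic decay estimate $\max\!\left(|f(t/\sqrt{r})|, e^{-t^2/(2r)}\right)^{r-1} \le e^{-ct^2}$ valid on the window, one obtains $\left|f(t/\sqrt{r})^r - e^{-t^2/2}\right| \le \frac{C\beta|t|^3}{\sqrt{r}}\,e^{-ct^2}$ for $|t| \le T$. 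The delicate points are keeping $|f(t/\sqrt{r})|$ bounded away from $1$ so that the $r$-th power does not blow up, which dictates the choice $T$ of order $\sqrt{r}/\beta$, and tracking the absolute constants. Finally, dividing by $|t|$ and integrating the Gaussian-weighted cubic yields an integral term of order $\beta/\sqrt{r}$, while the boundary term $O(1/T)$ is likewise of order $\beta/\sqrt{r}$; summing the two and collecting constants produces the desired absolute constant $C_{\mathit{BE}}$.
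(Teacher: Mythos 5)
The paper offers no proof of this lemma at all: it is invoked as a classical black box, with citations to Berry (1941) and Esseen (1942), so there is no internal argument to compare yours against. Your outline is the standard Fourier-analytic proof (essentially Esseen's own method, i.e., the method behind the very sources the paper cites): standardize, apply Esseen's smoothing inequality with the bound $\|\Phi'\|_\infty = 1/\sqrt{2\pi}$, estimate $\left|f(t/\sqrt{r})^r - e^{-t^2/2}\right|$ via a second-order Taylor expansion combined with the telescoping inequality $|a^r - b^r| \le r\,|a-b|\max(|a|,|b|)^{r-1}$, and choose the cutoff $T$ of order $\sqrt{r}/\beta$. The structure is correct and the delicate points you flag (keeping $|f|$ away from $1$ on the integration window, tracking constants) are the right ones. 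Two details you would need to make explicit in a full write-up: first, the comparison of $f(u)$ with $e^{-u^2/2}$ also produces a quartic error term from expanding the Gaussian, $e^{-u^2/2} = 1 - u^2/2 + O(u^4)$, and absorbing it into the cubic term uses $\beta = \rho/\sigma^3 \ge 1$, which follows from Jensen's inequality since $\E[|Y_1|^3] \ge \E[Y_1^2]^{3/2} = 1$; second, that same inequality $\beta \ge 1$ is what guarantees the boundary term $O(1/T) = O(\beta/\sqrt{r})$ and the integral term are of the same order, so the final bound is genuinely $C_{\mathit{BE}}\,\beta/\sqrt{r}$ with an absolute constant. With those details filled in, your sketch is a correct proof of the stated lemma.
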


Let us now state two assumptions on distributions of utilities; in Section~\ref{sec:asymptotic} we will work with the first and in Section~\ref{sec:existenceapprox} with the second.

\begin{itemize}
\item [\textbf{[A1]}] For each item $j\in M$, the utilities $u_i(j)\in[0,1]$ for $i\in N$ are drawn independently at random from a distribution $\mathcal{D}_j$. Each distribution $\mathcal{D}_j$ is \emph{non-atomic}, i.e., $\Pr[u_i(j)=x]=0$ for every $x\in[0,1]$. Moreover, the variances of the distributions are bounded away from zero, i.e., there exists a constant $\sigma_{min} > 0$ such that the variances of $\mathcal{D}_1, \dots, \mathcal{D}_m$ are at least $\sigma_{min}^2$.
\item [\textbf{[A2]}] For each $i\in N$ and $j\in M$, the utility $u_i(j)\in[0,1]$ is drawn independently at random from a probability distribution $\mathcal{D}_{i, j}$. The mean of each distribution is bounded away from zero, i.e., there exists a constant $\mu_{min} > 0$ such that $\E[u_i(j)] \geq \mu_{min}$ for every $i \in N, j \in M$.
\end{itemize}

Note that assumption [A2] is weaker than [A1]. Indeed, in [A2] we do not require $\mathcal{D}_{i,j}$ to be the same for every $i$. In addition, since $u_i(j)\in[0,1]$ for all $i\in N$ and $j\in M$, we have $\E[u_i(j)]\geq \E[u_i(j)^2]\geq \E[u_i(j)^2]-\E[u_i(j)]^2=\text{Var}(u_i(j)).$
Hence, the condition that the means of the distributions are bounded away from zero follows from the analogous condition on the variances.

In Section~\ref{sec:existenceapprox}, we consider the notion of {\em approximate envy-freeness}, which means that for each player, there is no allocation of another group for which the player's utility is a certain (multiplicative) factor larger than the utility of the player for the allocation of her own group. The notion is defined formally below.

\begin{definition}
We write $M_p \succsim_i^\alpha M_q$ for $\alpha \in [0, 1]$ if and only if $u_i(M_p) \geq \alpha u_i(M_q)$. Player $i$ considers an assignment $M_1, \dots, M_g$ of items to the $g$ groups \emph{$\alpha$-approximate envy-free} if $M_{g(i)} \succsim_i^\alpha M_p$ for every group $p \in \{1, \dots, g\}$. We say that an assignment is \emph{$\alpha$-approximate envy-free} if it is $\alpha$-approximate envy-free for every player $i$.
\end{definition}

Finally, we give the definition of a truthful mechanism, which we will use in Section~\ref{sec:existenceapprox}. 

\begin{definition}
A \emph{mechanism} is a function that takes as input the utility of player $i$ for item $j$ for all $i\in N$ and $j\in M$, and outputs a (possibly random) assignment of items to the groups. A mechanism is said to be \emph{truthful} if every player always obtains the highest possible (expected) utility by submitting her true utilities to the mechanism, regardless of the utilities that the remaining players submit.
\end{definition}

\section{Asymptotic Existence and Non-Existence of Fair Divisions}
\label{sec:asymptotic}

In this section, we study the existence and non-existence of fair divisions. First, we show that, when $m$ is $\Omega(n \log n)$, where $\Omega(\cdot)$ hides a sufficiently large constant, there exists an envy-free division with high probability (Theorem~\ref{thm:existence}). In particular, we prove that a welfare-maximizing allocation is likely to be envy-free. This gives rise to a simple algorithm that finds such a fair division with high probability. We also extend our existence result to the case where there are two groups but the groups need not have the same number of players; we show a similar result in this case, provided that each distribution $\mathcal{D}_j$ satisfies an additional symmetry condition (Theorem~\ref{thm:existencesymmetric}).

Moreover, on the non-existence front, we prove that when $m$ is smaller than $n$, the probability that a fair division exists is at most $1/g^{n-m}$ (Theorem~\ref{thm:nonexistence}). This has as consequences that if the number of items is less than the total number of players by a superconstant factor, or if the number of items is less than the total number of players and the number of groups is large, then the probability that an envy-free division exists is low (Corollaries~\ref{cor:nonexistence1} and \ref{cor:nonexistence2}).

We begin with our main existence result.

\begin{theorem}
\label{thm:existence}
Assume that [A1] holds. For any fixed $\sigma_{min} > 0$, there exists a constant $C > 0$ such that, for any sufficiently large $n'$, if $m > Cn\log n$, then there exists an envy-free assignment with high probability.
% COMMENT: I changed this because n' depends on sigma_{min} too.
%There exists a constant $C > 0$ depending only on $\sigma_{min}$ such that, for any sufficiently large $n'$, if $m > Cn\log n$, then there exists an envy-free assignment with high probability.
\end{theorem}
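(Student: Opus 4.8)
The plan is to show that a welfare-maximizing allocation is envy-free with high probability. The key insight is that when we assign each item to the group whose members collectively value it most, the resulting allocation tends to be well-balanced across groups. I would first fix a welfare-maximizing allocation $M_1,\ldots,M_g$ (breaking ties arbitrarily; since the distributions are non-atomic by [A1], ties occur with probability zero anyway). The core of the argument is a union bound: I want to bound the probability that some player $i$ envies some group $G_p \neq G_{g(i)}$, i.e. that $u_i(M_p) > u_i(M_{g(i)})$, and show this is $o(1)$ even after summing over all $n$ players and all $g-1$ competing groups.

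**The main difficulty** is that the sets $M_1,\ldots,M_g$ are themselves random and depend on \emph{all} players' utilities, including player $i$'s, so $u_i(M_p)$ and $u_i(M_{g(i)})$ are not sums of independent variables conditioned on the partition. To handle this, I would exploit the welfare-maximizing structure directly. Consider player $i$ and a rival group $G_p$. For the welfare-maximizing assignment, each item $j$ goes to the group maximizing $\sum_{k \in G_q} u_k(j)$. The essential observation is that each group's total utility for its bundle concentrates: for a \emph{fixed} group $G_q$, the quantity $\sum_{k\in G_q} u_k(M_q)$ is a sum over items of the per-item maxima of group-sums, and by symmetry and concentration each group receives roughly $m/g$ items with utility profiles that make the group-summed values comparable across groups. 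The plan is to show that with high probability every group receives $\Theta(m/g)$ items and, more importantly, that for each individual player $i$, the value $u_i(M_{g(i)})$ is not much smaller than $u_i(M_p)$.

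Concretely, I would argue as follows. Fix player $i$ and rival group $p$. I would lower-bound $u_i(M_{g(i)})$ and upper-bound $u_i(M_p)$ by relating both to the expected per-item contribution. Since item $j\in M_{g(i)}$ was chosen to maximize the group sum for $G_{g(i)}$, player $i$'s own utility $u_i(j)$ for those items is biased \emph{upward} relative to a typical item (player $i$'s vote helped select them), whereas for items in $M_p$ player $i$'s utility is independent of the selection made by group $p$'s other members and so behaves like a fresh sample. Writing $\mu_j := \E[\mathcal{D}_j]$, I expect $\E[u_i(M_{g(i)})] \geq \sum_{j\in M_{g(i)}}\mu_j$ plus a positive correction, while $\E[u_i(M_p) \mid M_p] = \sum_{j\in M_p}\mu_j$. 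The gap $u_i(M_{g(i)}) - u_i(M_p)$ is then, in expectation, at least the number of items times the per-item advantage from self-selection, which is $\Omega(1/n')$ per item by the non-atomicity and variance lower bound $\sigma_{min}^2$; summed over $\Theta(m/g)$ items this gives a positive expected margin of order $m/(gn')= m/n$.

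**The finish** is a Chernoff-type concentration (Lemma~\ref{lem:chernoff}), or where the upward bias must be quantified the Berry-Esseen theorem (Lemma~\ref{lem:berry}), to show the deviation of $u_i(M_{g(i)})-u_i(M_p)$ below zero has probability at most $\exp(-\Omega(m/n))$. Taking $m > Cn\log n$ makes this at most $n^{-\Omega(C)}$, and a union bound over the at most $n\cdot g \leq n^2$ player-group pairs drives the total failure probability to $o(1)$ once $C$ is large enough. I expect the genuinely delicate step to be making the self-selection advantage rigorous: because $M_{g(i)}$ depends on $u_i$, I must carefully condition on the other players' utilities, treat player $i$'s utilities as the only remaining randomness, and verify that the set selected for her group still carries a quantifiable positive bias in $u_i$ — this is where the non-atomic, bounded-variance hypotheses of [A1] and the precise (rather than merely asymptotic) tail bounds of Berry-Esseen are essential.
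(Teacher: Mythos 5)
Your overall strategy is the same as the paper's: analyze the welfare-maximizing (greedy, item-by-item) allocation, quantify the ``self-selection'' bias that player $i$ enjoys on items won by her own group, and finish with a Chernoff bound plus a union bound, invoking Berry-Esseen to quantify the bias. However, there is a genuine quantitative gap at exactly the step you yourself flag as delicate, and with your numbers the argument does not deliver the theorem.

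You estimate the per-item self-selection advantage as $\Omega(1/n')$, so that summing over the $\Theta(m/g)$ items in a bundle gives an expected margin of order $m/(gn') = m/n$. The correct bias is substantially larger: conditioned on item $j$ being won by group $G_{g(i)}$, the group sum $\sum_{p\in G_{g(i)}}u_p(j)$ is the maximum of $g$ i.i.d.\ sums of $n'$ bounded variables, and by Berry-Esseen this maximum exceeds its mean $\mu_j n'$ by $\Omega(\sigma_{min}\sqrt{n'})$ in expectation; by exchangeability within the group the excess is split equally among its $n'$ members, so player $i$'s conditional expected utility for such an item is $\mu_j + \Omega(\sigma_{min}/\sqrt{n'})$ --- a bias of order $1/\sqrt{n'}$, not $1/n'$. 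This factor of $\sqrt{n'}$ is the crux of the theorem. With your margin $\Delta = \Theta(m/n)$, a Chernoff bound for a sum of $m$ bounded independent terms with expectation $\Theta(m/g)$ gives exponent $\Theta\left(\Delta^2/(m/g)\right) = \Theta\left(m/(nn')\right)$, not the $\exp(-\Omega(m/n))$ you claim; your union bound would then only close for $m = \Omega(nn'\log n)$, far above the theorem's $Cn\log n$ threshold. With the correct bias the margin is $\Omega\left(m\sigma_{min}/(g\sqrt{n'})\right)$ and the exponent becomes $\Omega(\sigma_{min}^2 m/n)$, which is exactly what is needed. Separately, the dependence issue you raise (and propose to handle by conditioning on the other players' utilities) can be avoided entirely: since the greedy rule decides each item using only that item's utilities, the per-item contributions $A_{i,j}$ and $B_{i,j}^{g'}$ are already independent across $j$ unconditionally, so Chernoff applies directly to their sums --- this is how the paper sidesteps the problem without any conditioning.
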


In fact, we not only prove that an envy-free assignment exists but also give a simple greedy algorithm that finds one such assignment with high probability. The algorithm idea is simple; we greedily assign each item to the group that maximizes the total utility of the item with respect to the players in that group. This yields an allocation that maximizes social welfare. The allocation is therefore Pareto optimal, i.e., there exists no other allocation in which every player is weakly better off and at least one player is strictly better off.
The pseudocode of the algorithm is shown below.

\begin{algorithm}
\caption{Greedy Assignment Algorithm for Multiple Groups}
\label{greedy}
\begin{algorithmic}[1]
\Procedure{Greedy\textendash Assignment\textendash Multiple}{}
\State let $M_1 = \cdots = M_g = \emptyset$.
\For{each item $j \in M$}
\State choose $k^*$ from $\argmax_{k = 1, \dots, g} \sum_{p \in G_k} u_p(j)$
\State let $M_{k^*} \leftarrow M_{k^*} \cup \{j\}$
\EndFor
\EndProcedure
\end{algorithmic}
\end{algorithm}

The analysis of the algorithm contains similarities to that of the corresponding result in the setting with one player per group \cite{DickersonGoKa14}. However, significantly more technical care will be required to handle our setting in which each group contains multiple players. This is reflected by our use of the Berry-Esseen theorem (Lemma \ref{lem:berry}). Here we provide a proof sketch that contains all the high-level ideas but leaves out some tedious details, especially calculations; the full proof can be found in the appendix.

\begin{proof}[Proof sketch of Theorem~\ref{thm:existence}]
 We will first bound $\Pr[M_{g'} \succ_i M_{g(i)}]$ for each player $i$ and each group $G_{g'} \ne G_{g(i)}$; we then use the union bound at the end to conclude Theorem~\ref{thm:existence}. To bound $\Pr[M_{g'} \succ_i M_{g(i)}]$, we define a random variable $A_{i,j}$ to be $u_i(j)$ if item $j$ is assigned to group $G_{g(i)}$ and zero otherwise. Similarly, define $B^{g'}_{i,j}$ to be $u_i(j)$ if the item is assigned to group $G_{g'}$ and zero otherwise.

 Intuitively, with respect to player $i$, $A_{i, j}$ is the utility contribution of item $j$ to the group $G_{g(i)}$. On the other hand, $B^{g'}_{i, j}$ is the utility that is ``lost'' to group $G_{g'}$. In other words, $M_{g'} \succ_i M_{g(i)}$ if and only if $S_A < S_B$, where $S_A = \sum_{j \in M} A_{i, j}$ and $S_B = \sum_{j \in M} B_{i, j}^{g'}$. We will use the Chernoff bound to estimate the probability of this event. To do so, we first need to bound $\E[A_{i, j}]$ and $\E[B_{i, j}^{g'}]$.

From symmetry between different groups, the probability that item $j$ is assigned to each group is $1/g$. Thus, we have $\E[A_{i, j}] = \frac{1}{g}\E\left[u_i(j) \mid \text{item } j \text{ is assigned to } G_{g(i)}\right]$ and $\E[B_{i, j}^{g'}] = \frac{1}{g}\E\left[u_i(j) \mid \text{item } j \text{ is assigned to } G_{g'}\right]$. It is now fairly easy to see that $\E[B_{i, j}^{g'}] \leq \mu_j/g$, where $\mu_j$ is the mean of $\mathcal{D}_j$; the reason is that the expected value of $u_i(j)$ when $j$ is not assigned to $G_{g(i)}$ is clearly at most $\mu_j$. For convenience, we will assume in this proof sketch that $\E[B_{i, j}^{g'}]$ is roughly $\mu_j/g$.

Now, we will bound the expected value of $A_{i, j}$. For each $p = 1, \dots, g$, let $X_p$ denote the sum of the utilities of item $j$ with respect to all players in $G_p$. Due to symmetry among players within the same group, we have
\begin{align*}
\E[A_{i, j}] &= \frac{1}{n'g} \E\left[X_{g(i)} \mid X_{g(i)}=\max\{X_1, \dots, X_g\}\right] \\
&= \frac{1}{n'g} \E[\max\{X_1, \dots, X_g\}].
\end{align*}
The latter equality comes from the symmetry between different groups.

Now, we use the Berry-Esseen theorem (Lemma~\ref{lem:berry}), which tells us that each of $X_1, \dots, X_g$ is close to $\mathcal{N}(\mu_j n', \Omega(\sigma_{min}^2 n'))$. With simple calculations, one can see that the expectation of the maximum of $g$ identically independent random variables sampled from $\mathcal{N}(\mu_j n', \Omega(\sigma_{min}^2 n'))$ is $\mu_j n' + \Omega(\sigma_{min}\sqrt{n'})$. Roughly speaking, we also have $$\E\left[A_{i, j}\right] = \frac{\mu_j}{g} + \Omega\left(\frac{\sigma_{min}}{g\sqrt{n'}}\right).$$

Having bounded the expectations of $A_{i, j}$ and $B_{i, j}^{g'}$, we are ready to apply the Chernoff bound. Let $\delta = \Theta\left(\frac{\sigma_{min}}{\mu_j \sqrt{n'}}\right)$ where $\Theta(\cdot)$ hides some sufficiently small constant. When $n'$ is sufficiently large, we can see that $(1 + \delta)\E[B_{i, j}^{g'}] < (1 - \delta)\E[A_{i, j}]$, which implies that $(1 + \delta)\E[S_B] < (1 - \delta)\E[S_A]$. Using the Chernoff bound (Lemma~\ref{lem:chernoff}) on $S_A$ and $S_B$, we have
\begin{align*}
\Pr[S_A \leq (1 - \delta)\E[S_A]] &\leq \exp{\left(\frac{-\delta^2\E[S_A]}{2}\right)},
\end{align*}
and,
\begin{align*}
\Pr[S_B \geq (1 + \delta)\E[S_B]] &\leq \exp{\left(\frac{-\delta^2\E[S_B]}{3}\right)}.
\end{align*}

Thus, we have
\begin{align*}
\Pr[S_A < S_B] &\leq \exp{\left(\frac{-\delta^2\E[S_A]}{2}\right)} + \exp{\left(\frac{-\delta^2\E[S_B]}{3}\right)} \\
&\leq 2\exp{\left(-\Omega\left(\frac{\sigma_{min}^2 m }{n \mu_j}\right)\right)} \\
(\text{Since } \mu_j \leq 1) &\leq 2\exp{\left(-\Omega\left(\frac{\sigma_{min}^2 m }{n}\right)\right)}.
\end{align*}

Recall that $\Pr[ M_{g'}\succ_i M_{g(i)}] = \Pr[S_A < S_B].$ Using the union bound for all $i$ and all $g' \ne g(i)$, the probability that the assignment output by the algorithm is not envy-free is at most
\begin{align*}
2n(g - 1)\exp{\left(-\Omega\left(\frac{\sigma_{min}^2 m }{n}\right)\right)},
\end{align*}
which is at most $1/m$ when $m \geq C n \log n$ for some sufficiently large $C$. This completes the proof sketch for the theorem.
\end{proof}

Unfortunately, the algorithm in Theorem \ref{thm:existence} cannot be extended to give a proof for the case where the groups do not have the same number of players. However, in a more restricted setting where there are only two groups with potentially different numbers of players and an additional symmetry condition on $\mathcal{D}_1, \dots, \mathcal{D}_m$ is enforced, a result similar to that in Theorem~\ref{thm:existence} can be shown, as stated in the theorem below.

\begin{theorem}
\label{thm:existencesymmetric}
Assume that [A1] holds. Suppose that there are only two groups but not necessarily with the same number of players; let $n_1, n_2$ denote the numbers of players of the first and second group respectively (so $n=n_1+n_2$). Assume also that $\mathcal{D}_1, \dots, \mathcal{D}_m$ are symmetric (around $1/2$)\footnote{There is nothing special about the number $1/2$; a similar result holds if the distributions are supported on a subset of an interval $[a,b]$ and are symmetric around $(a+b)/2$, for some $0<a<b$.}, i.e., \[\Pr_{X \sim \mathcal{D}_j}\left[X \leq \frac{1}{2}-x\right] = \Pr_{X \sim \mathcal{D}_j}\left[X \geq \frac{1}{2}+x\right]\] for all $x \in [0, 1/2]$. For any fixed $\sigma_{min} > 0$, there exists a constant $C > 0$ such that, for any sufficiently large $n_1$ and $n_2$, if $m > Cn\log n$, then there exists an envy-free assignment with high probability.
\end{theorem}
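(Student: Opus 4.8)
The plan is to mimic the per-player Chernoff-and-union-bound strategy of Theorem~\ref{thm:existence}, but to replace the social-welfare (maximum-total-utility) assignment rule, which breaks down when $n_1 \ne n_2$ because the larger group tends to win every item, by a \emph{recentered} rule that the symmetry of the $\mathcal{D}_j$ around $1/2$ makes available. Concretely, I would assign item $j$ to group $G_1$ if $\sum_{p \in G_1}(u_p(j) - 1/2) > \sum_{p \in G_2}(u_p(j) - 1/2)$ and to $G_2$ otherwise; equivalently, compare the group sums after handicapping the larger group by $(n_1 - n_2)/2$. Writing $D_j := \sum_{p\in G_1}(u_p(j)-1/2) - \sum_{p\in G_2}(u_p(j)-1/2)$, each summand is symmetric around $0$ and the summands are independent, so $D_j$ is symmetric around $0$; non-atomicity (assumption [A1]) then gives $\Pr[D_j>0]=\Pr[D_j<0]=1/2$. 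Hence each item is assigned to each group with probability exactly $1/2$, independently across items, which restores the symmetry that the analysis of Theorem~\ref{thm:existence} relied upon.

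With this rule in place I would, exactly as in the proof sketch of Theorem~\ref{thm:existence}, fix a player $i$ and the rival group, set $A_{i,j}$ (resp.\ $B_{i,j}$) to be $u_i(j)$ when $j$ is assigned to $i$'s own group (resp.\ the rival group) and $0$ otherwise, and bound $\Pr[S_A < S_B]$ via the Chernoff bound (Lemma~\ref{lem:chernoff}). Because $\mu_j = 1/2$ by symmetry, $\E[A_{i,j}] + \E[B_{i,j}] = 1/2$, so the whole argument hinges on showing a positive per-item gap $\Delta_j := \E[A_{i,j}] - \E[B_{i,j}]$. Isolating player $i$'s own contribution $\tilde Y := u_i(j)-1/2$ from the rest of the difference $D_j = \tilde Y + D_j'$ (with $D_j' \perp \tilde Y$ a sum of $n-1$ i.i.d.\ symmetric terms), a short computation using the symmetry of $D_j'$ reduces the gap to $\Delta_j = \E\!\left[|\tilde Y|\cdot \Pr\nolimits_{D_j'}[\,|D_j'| < |\tilde Y|\,]\right]$. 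The target is $\Delta_j = \Omega(\sigma_{min}/\sqrt{n})$ for \emph{every} player, regardless of which group is larger; once this holds, choosing $\delta = \Theta(\Delta_j)$ gives $\Pr[S_A<S_B] \le 2\exp(-\Omega(\sigma_{min}^2 m/n))$, and a union bound over all $n$ players is $o(1)$ as soon as $m > Cn\log n$ for a suitably large $C = C(\sigma_{min})$.

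The main obstacle is establishing the lower bound $\Delta_j = \Omega(\sigma_{min}/\sqrt n)$, and here a new difficulty appears that was absent in Theorem~\ref{thm:existence}. There the gap arose from $\E[\max\{X_1,\dots,X_g\}]$ exceeding the mean by $\Omega(\sigma_{min}\sqrt{n'})$, a \emph{global} quantity spread over a window of width $\Theta(\sqrt{n'})$, against which the $O(1/\sqrt{n'})$ pointwise error of Berry-Esseen (Lemma~\ref{lem:berry}) integrates to a negligible constant. By contrast, $\Delta_j$ is governed by the behaviour of $D_j'$ in a \emph{fixed} $O(1)$-width window about its mean: since $|\tilde Y| \le 1/2$ while $D_j'$ has standard deviation $\Theta(\sigma_{min}\sqrt n)$, what is needed is the anti-concentration estimate $\Pr[\,|D_j'| < t\,] = \Omega(t/(\sigma_{min}\sqrt n))$ for $t$ of constant order. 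This is exactly a \emph{local} statement, and applying Berry-Esseen naively fails: the normal approximation predicts $\Pr[|D_j'|<t] \approx 2t\,\phi(0)/(\sigma_{min}\sqrt n)$, which is of the \emph{same} order $\Theta(1/\sqrt n)$ as the Berry-Esseen error itself, so the error can swamp the main term. I therefore expect to prove the gap by a genuinely local argument: either a local central limit theorem for the density of $D_j'$ near $0$ (yielding $f_{D_j'}(0) = \Theta(1/(\sigma_{min}\sqrt n))$), or a direct small-ball lower bound that exploits the symmetry and non-atomicity of the $\mathcal{D}_j$ to control the central window. Restricting attention to $|\tilde Y| \ge t_0$ for a constant $t_0 = \Theta(\sigma_{min})$ --- a constant-probability event because $\Var(\tilde Y) \ge \sigma_{min}^2$ --- reduces the whole matter to the single small-ball bound $\Pr[|D_j'| < t_0] = \Omega(1/\sqrt n)$, which I view as the technical heart of the theorem; Berry-Esseen would then enter, as in Theorem~\ref{thm:existence}, only to certify the approximate normality of the group sums that underlies this estimate.
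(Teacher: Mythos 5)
Your allocation rule is not the paper's: the paper keeps a greedy rule but compares \emph{average} group utilities $\frac{1}{n_k}\sum_{p\in G_k}u_p(j)$ (its Algorithm 2), proves via the symmetry assumption that each item then goes to each group with probability exactly $1/2$, and asserts that the rest of the Theorem~\ref{thm:existence} analysis carries over. Your recentered-sum rule (assign by the sign of $D_j=\sum_{p\in G_1}(u_p(j)-1/2)-\sum_{p\in G_2}(u_p(j)-1/2)$) is a genuinely different rule---the two coincide only when $n_1=n_2$---and your skeleton is sound: the probability-$1/2$ assignment claim, the gap formula $\Delta_j=\E\bigl[|\tilde Y|\Pr[|D_j'|<|\tilde Y|]\bigr]$, and the Chernoff-plus-union-bound reduction to $\Delta_j=\Omega(\sigma_{min}/\sqrt n)$ are all correct.

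However, there is a genuine gap at exactly the step you call the technical heart: the lemma you reduce to, $\Pr[|D_j'|<t_0]=\Omega(1/\sqrt n)$ for a fixed constant $t_0=\Theta(\sigma_{min})$, is \emph{false} under [A1], so neither a local CLT nor any small-ball argument can deliver it. Non-atomicity places no restriction on near-lattice behavior. Let $\mathcal{D}_j$ be uniform on $[0,\epsilon]\cup[1-\epsilon,1]$ with $\epsilon=1/(100n)$: this is non-atomic, symmetric around $1/2$, with variance at least $1/4-\epsilon$. Each centered utility is $s(1/2-\eta)$ with $s\in\{\pm1\}$ and $\eta\in[0,\epsilon]$, so $D_j'=\frac12\sum_k s_k-\sum_k s_k\eta_k$; when $n-1$ is odd, parity forces $\bigl|\sum_k s_k\bigr|\geq 1$, hence $|D_j'|\geq 1/2-(n-1)\epsilon\geq 0.49$ almost surely. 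Thus $\Pr[|D_j'|<t_0]=0$ for every $t_0\leq 0.49$ (and any admissible $t_0$ is below $1/2$, since otherwise the truncation event $|\tilde Y|\geq t_0$ itself has probability zero; shrinking $\epsilon$ pushes $0.49$ arbitrarily close to $1/2$). The density version fails identically, as $D_j'$ has no mass near the origin. This does not contradict $\Delta_j=\Omega(\sigma_{min}/\sqrt n)$, which is still true here: the gap is generated by $|\tilde Y|$ lying within $\epsilon$ of $1/2$ paired with $|D_j'|$ lying just below $1/2$ (an event of probability $\Theta(1/\sqrt n)$)---precisely the joint structure that the bound $\Pr[|D_j'|<|\tilde Y|]\geq\Pr[|D_j'|<t_0]$ discards. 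So the reduction, not just its proof, is irreparable.

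Fortunately your rule admits a proof of the gap bound that needs no distributional regularity at all, so the proposal can be completed. Set $\zeta_p:=u_p(j)-1/2$ for $p\in G_1$ and $\zeta_p:=-(u_p(j)-1/2)$ for $p\in G_2$; by symmetry of $\mathcal{D}_j$ these are $n$ i.i.d.\ symmetric random variables and $D_j=\sum_{p=1}^n\zeta_p$. A direct check (using $\E[\mathrm{sgn}(D_j)]=0$) shows that for \emph{every} player $i$, in either group, the per-item gap equals $\E[\zeta_i\,\mathrm{sgn}(D_j)]$, whence by exchangeability
\begin{align*}
\Delta_j \;=\; \frac{1}{n}\sum_{p=1}^n \E\left[\zeta_p\,\mathrm{sgn}(D_j)\right] \;=\; \frac{1}{n}\E\left[D_j\,\mathrm{sgn}(D_j)\right] \;=\; \frac{\E[|D_j|]}{n}.
\end{align*}
H\"{o}lder's inequality gives $\E[D_j^2]\leq\E[|D_j|]^{2/3}\E[D_j^4]^{1/3}$, and since $\E[D_j^2]=n\sigma_j^2$ and $\E[D_j^4]=n\E[\zeta_1^4]+3n(n-1)\sigma_j^4\leq n\sigma_j^2/4+3n^2\sigma_j^4$, we get $\E[|D_j|]\geq\E[D_j^2]^{3/2}/\E[D_j^4]^{1/2}\geq\sigma_j\sqrt n/2$ once $n\geq 1/(4\sigma_{min}^2)$. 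Hence $\Delta_j\geq\sigma_{min}/(2\sqrt n)$ uniformly over all distributions allowed by the hypotheses, including the near-lattice ones above, and your Chernoff and union-bound steps then go through verbatim. Note that this argument replaces Berry--Esseen entirely, and it also exhibits a real advantage of your rule over the paper's: under the average-comparison rule the analogous per-item gap for a player in the larger group is of order $\sigma_j\sqrt{n_2}/n_1$, which is $o(\sigma_j/\sqrt n)$ for lopsided groups, whereas your rule keeps the gap at $\Theta(\sigma_j/\sqrt n)$ for all players.
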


The algorithm is similar to that in Theorem~\ref{thm:existence}; the only difference is that, instead of assigning each item to the group with the highest \emph{total} utility over its players, we assign the item to the group with the highest \emph{average} utility, as seen in the pseudocode of Algorithm \ref{greedy2}.

\begin{algorithm}
\caption{Greedy Assignment Algorithm for Two Possibly Unequal-Sized Groups}
\label{greedy2}
\begin{algorithmic}[1]
\Procedure{Greedy\textendash Assignment\textendash Two}{}
\State let $M_1 = M_2 = \emptyset$.
\For{each item $j \in M$}
\State choose $k^*$ from $\argmax_{k = 1, 2} \frac{\sum_{p \in G_k} u_p(j)}{n_k}$
\State let $M_{k^*} \leftarrow M_{k^*} \cup \{j\}$
\EndFor
\EndProcedure
\end{algorithmic}
\end{algorithm}

The proof is essentially the same as that of Theorem~\ref{thm:existence} after the random variables defined are changed corresponding to the modification in the algorithm. For instance, $A_{i, j}$ is now defined as
\begin{align*}
A_{i, j} =
u_i(j) \cdot \mathbf{1}\left[g(i) = \argmax_{k = 1, 2} \frac{\sum_{p \in G_k} u_p(j)}{n_k}\right]
\end{align*}
where $\mathbf{1}[E]$ denotes an indicator variable for event $E$.

Due to the similarities between the two proofs, we will not repeat the whole proof. Instead, we would like to point out that all the arguments from Theorem~\ref{thm:existence} work here save for only one additional fact that we need to prove:

\begin{proposition}
Let $X_1$ and $X_2$ denote $\sum_{p \in G_1} u_p(j)/n_1$ and $\sum_{p \in G_2} u_p(j)/n_2$ respectively. Then,
\begin{align*}
\Pr[X_1 \geq X_2] = \frac{1}{2}.
\end{align*}
\end{proposition}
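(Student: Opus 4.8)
The plan is to exploit the symmetry of $\mathcal{D}_j$ around $1/2$ through a reflection argument. Since all utilities $u_p(j)$ for $p \in G_1 \cup G_2$ are drawn independently from the symmetric distribution $\mathcal{D}_j$, the random variable $1 - u_p(j)$ has exactly the same distribution as $u_p(j)$. Consequently, the joint distribution of the family $(u_p(j))_{p \in G_1 \cup G_2}$ is invariant under the coordinate-wise reflection $u_p(j) \mapsto 1 - u_p(j)$. I would therefore track what this reflection does to $X_1$ and $X_2$.

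Under the reflection, $X_1 = \frac{1}{n_1}\sum_{p \in G_1} u_p(j)$ is sent to $\frac{1}{n_1}\sum_{p \in G_1}(1 - u_p(j)) = 1 - X_1$, and similarly $X_2 \mapsto 1 - X_2$. This is precisely where taking the \emph{average} rather than the total utility is essential: reflecting a sum of $n_k$ terms introduces an additive constant $n_k$, but after dividing by $n_k$ this becomes the same constant $1$ for both groups, so the comparison flips cleanly even when $n_1 \ne n_2$. Hence the event $\{X_1 \geq X_2\}$ is mapped to $\{1 - X_1 \geq 1 - X_2\} = \{X_1 \leq X_2\}$, and because the reflection preserves the joint distribution, the two events are equiprobable:
\[
\Pr[X_1 \geq X_2] = \Pr[X_1 \leq X_2].
\]

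It then remains to show that these two probabilities sum to $1$, for which I need $\Pr[X_1 = X_2] = 0$. Here I would invoke non-atomicity from [A1]: the variable $X_1$ is (a scaling of) a sum of independent random variables, at least one of which is non-atomic, and is therefore itself non-atomic; moreover $X_1$ depends only on the utilities of $G_1$ and so is independent of $X_2$. Conditioning on $X_2$ and using $\Pr[X_1 = c] = 0$ for every constant $c$ gives $\Pr[X_1 = X_2] = \E_{X_2}\!\left[\Pr[X_1 = X_2 \mid X_2]\right] = 0$. Combining this with the displayed equality yields $2\,\Pr[X_1 \geq X_2] = \Pr[X_1 \geq X_2] + \Pr[X_1 \leq X_2] = 1 + \Pr[X_1 = X_2] = 1$, so $\Pr[X_1 \geq X_2] = 1/2$.

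The conceptual content lies entirely in the reflection step; the only technical obstacle I anticipate is justifying the no-tie claim $\Pr[X_1 = X_2] = 0$, which is where the non-atomicity hypothesis in [A1] enters and without which the statement could fail.
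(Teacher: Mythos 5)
Your proof is correct, and while it rests on the same symmetry idea as the paper's, the execution is genuinely different and in fact more robust. The paper works with the induced distributions of $X_1$ and $X_2$: it writes $\Pr[X_1 \geq X_2]$ as a double integral against probability density functions $f_1$, $f_2$, uses the pointwise symmetry $f_i(x) = f_i(1-x)$, and changes variables to obtain $\Pr[X_2 \geq X_1]$. That manipulation tacitly assumes $X_1$ and $X_2$ admit densities, which does not actually follow from [A1]: non-atomicity excludes point masses but not singular continuous laws (Cantor-type), so the paper's argument, read literally, needs a further hypothesis. Your reflection $u_p(j) \mapsto 1 - u_p(j)$ applied to the joint law of the underlying utilities uses only distributional invariance, so it covers every distribution permitted by [A1]. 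You also make explicit a step the paper leaves implicit: getting from $\Pr[X_1 \geq X_2] = \Pr[X_1 \leq X_2]$ to the value $1/2$ requires $\Pr[X_1 = X_2] = 0$, which you correctly derive from non-atomicity (a sum with an independent non-atomic summand is itself non-atomic) together with the independence of $X_1$ and $X_2$; in the paper this is absorbed silently into the density formalism. Finally, your remark that averaging, rather than summing, is what makes the reflection add the same constant $1$ on both sides when $n_1 \neq n_2$ pinpoints exactly why Algorithm 2 must use average utilities, which is a point worth having on record.
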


\begin{proof}
To show this, observe first that, since $\mathcal{D}_j$ is symmetric over $1/2$, the distributions of $X_1$ and $X_2$ are also symmetric over $1/2$. Let $f_1$ and $f_2$ be the probability density functions of $X_1$ and $X_2$ respectively, we have
\begin{align*}
\Pr[X_1 \geq X_2] &= \int_0^1 \int_0^x f_1(x)f_2(y) dy dx \\
%(\text{Since } X_1, X_2 \text{ are symmetry over } 1/2) 
&= \int_0^1 \int_0^x f_1(1-x)f_2(1-y) dy dx \\
%(\text{Substitute } x = 1 - x, y = 1 - y) 
&= \int_0^1 \int_x^1 f_1(x)f_2(y) dy dx \\
&= \Pr[X_2 \geq X_1].
\end{align*}

Hence, $\Pr[X_1 \geq X_2] = \Pr[X_2 \geq X_1] = 1/2$, as desired.
\end{proof}

Next, we state and prove an upper bound for the probability that an envy-free assignment exists when the number of players exceeds the number of items. Such an assignment obviously does not exist under this condition if every group contains only one player. In fact, the theorem holds even without the assumption that the variances of $\mathcal{D}_1, \dots, \mathcal{D}_m$ are at least $\sigma_{min}^2 > 0$.

\begin{theorem}
\label{thm:nonexistence}
Assume that [A1] holds. If $m<n$, then there exists an envy-free assignment with probability at most $1/g^{n-m}$.
\end{theorem}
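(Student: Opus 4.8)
The plan is to reduce the existence statement to a per-assignment bound via a union bound, and then to exploit the independence of the players' valuations together with an AM--GM argument. There are exactly $g^m$ possible assignments of the $m$ items to the $g$ groups, since each item is given to one of the $g$ groups. Hence
\[
\Pr[\exists\ \text{envy-free assignment}] \;\le\; \sum_{(M_1,\dots,M_g)} \Pr[(M_1,\dots,M_g)\ \text{is envy-free}],
\]
and it suffices to prove that every fixed assignment is envy-free with probability at most $1/g^{n}$; summing this over the $g^m$ assignments then yields the claimed bound $g^m\cdot g^{-n}=1/g^{n-m}$. Note that the hypothesis $m<n$ enters only to make this bound nontrivial (smaller than $1$).

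Fix an assignment $(M_1,\dots,M_g)$. Under [A1] the valuations of different players are independent, and whether player $i$ regards the assignment as envy-free depends only on her own valuations; hence these events are independent across players and
\[
\Pr[(M_1,\dots,M_g)\ \text{is envy-free}] \;=\; \prod_{i\in N}\Pr[\,u_i(M_{g(i)})\ge u_i(M_p)\ \text{for all }p\,].
\]
Because every player draws $u_i(j)$ from the same distribution $\mathcal{D}_j$, the probability in the product depends on $i$ only through her group: writing $p_k$ for the probability that a generic player ranks $M_k$ (weakly) highest among $M_1,\dots,M_g$, and recalling that each group has exactly $n'$ players, the right-hand side equals $\prod_{k=1}^{g} p_k^{\,n'} = \bigl(\prod_{k=1}^{g} p_k\bigr)^{n'}$.

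The key structural fact is that $\sum_{k=1}^{g} p_k = 1$. Indeed, $\sum_k p_k$ is exactly the expected number of bundles that a generic player ranks weakly highest, so it suffices to show that almost surely there is a \emph{unique} such bundle. This is where non-atomicity is used: since $m\ge 1$, at least one bundle is nonempty, and as each $\mathcal{D}_j$ is non-atomic a nonempty bundle has strictly positive value almost surely, so the maximum value is positive and no empty bundle can be weakly best; moreover, for any two disjoint nonempty bundles the difference of their values is a sum of independent non-atomic random variables and is therefore itself non-atomic, so the two values coincide with probability zero. A union bound over the finitely many pairs of bundles shows that the weakly-best bundle is almost surely unique, giving $\sum_k p_k = 1$.

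Finally, applying the AM--GM inequality to the nonnegative numbers $p_1,\dots,p_g$ with $\sum_k p_k = 1$ gives $\prod_{k=1}^g p_k \le \bigl(\tfrac{1}{g}\sum_k p_k\bigr)^g = g^{-g}$, whence
\[
\Pr[(M_1,\dots,M_g)\ \text{is envy-free}] = \Bigl(\prod_{k=1}^g p_k\Bigr)^{n'} \le g^{-gn'} = g^{-n}.
\]
Combining this with the union bound over the $g^m$ assignments completes the argument. I expect the one genuinely delicate point to be the claim $\sum_k p_k = 1$, i.e.\ ruling out ties so that the expected number of weakly-best bundles is exactly one; everything else is either a counting step or a direct consequence of independence and AM--GM.
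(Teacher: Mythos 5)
Your proof is correct and follows essentially the same route as the paper's: a union bound over the $g^m$ possible assignments, independence across players to factor the per-assignment probability into $\prod_{k=1}^{g} p_k^{\,n'}$, the AM--GM inequality, and non-atomicity to conclude $\sum_{k=1}^{g} p_k = 1$, giving the per-assignment bound $1/g^n$. If anything, your handling of the one delicate point (ruling out ties, including the case of empty bundles) is more careful than the paper's, which simply asserts that the maximizing events may be assumed disjoint.
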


\begin{proof}
Suppose that $m \leq n - 1$, and fix an assignment $M_1, \dots, M_g$. We will bound the probability that this assignment is envy-free. Consider any player $i$ in group $G_{g(i)}$. The probability that the assignment is envy-free for this particular player is the probability that the total utility of the player for the bundle $M_{g(i)}$ is no less than that for other bundles $M_j$. This can be written as follows:
\begin{align*}
\Pr_{u_i(1) \in \mathcal{D}_1, \dots, u_i(m) \in \mathcal{D}_m}\left[\sum_{l \in M_{g(i)}} u_i(l) = \max_{k = 1, \dots, g}{\sum_{l \in M_k} u_i(l)}\right].
\end{align*}

For each $j = 1, \dots, g$, define $p_j$ as
\begin{align*}
p_j = \Pr_{x_1 \in \mathcal{D}_1, \dots, x_m \in \mathcal{D}_m}\left[\sum_{l \in M_j} x_l = \max_{k = 1, \dots, g}{\sum_{l \in M_k} x_l}\right].
\end{align*}
Notice that the probability that the assignment is envy-free for player $i$ is $p_{g(i)}$.

Since $u_i(1), \dots, u_i(m)$ is chosen independently of $u_{i'}(1), \dots, u_{i'}(m)$, for every $i' \ne i$, the probability that this assignment is envy-free for every player is simply the product of the probability that the assignment is envy-free for each player, i.e.,
\begin{align*}
\prod_{i = 1}^{n} p_{g(i)} &= \prod_{j=1}^{g} p_{j}^{n'}.
\end{align*}

Using the inequality of arithmetic and geometric means, we arrive at the following bound:
\begin{align*}
\prod_{j=1}^{g} p_{j}^{n'} \leq \left(\frac{1}{g} \sum_{j=1}^{g} p_j\right)^{n'g}.
\end{align*}

Recall our assumption that the distributions $\mathcal{D}_j$ are non-atomic. Hence we may assume that the events $\sum_{l \in M_j} x_l = \max_{k = 1, \dots, g}{\sum_{l \in M_k} x_l}$ are disjoint for different $j$. This implies that $\sum_{j=1}^g p_j = 1$. Thus, the probability that this fixed assignment is envy-free is at most
\begin{align*}
\left(\frac{1}{g} \sum_{j=1}^{g} p_j\right)^{n'g} = \left(\frac{1}{g}\right)^{n'g} = \frac{1}{g^n}.
\end{align*}

Finally, since each assignment is envy-free with probability at most $1/g^{n}$ and there are $g^m$ possible assignments, by union bound the probability that there exists an envy-free assignment is at most $1/g^{n-m}$. This completes the proof of the theorem.
\end{proof}

The following corollaries can be immediately  derived from Theorem \ref{thm:nonexistence}. They say that an envy-free allocation is unlikely to exist when the number of items is less than the number of players by a superconstant factor, or when the number of items is less than the number of players and the number of groups is large.

\begin{corollary}
\label{cor:nonexistence1}
Assume that [A1] holds. When $m=n-\omega(1)$, the probability that there exists an envy-free assignment converges to zero as $n\rightarrow\infty$.
\end{corollary}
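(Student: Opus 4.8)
The plan is to derive this directly from Theorem~\ref{thm:nonexistence}, which I will treat as a black box. First I would unpack the growth condition: writing $m = n - \omega(1)$ is precisely the statement that $n - m \to \infty$ as $n \to \infty$. In particular, for all sufficiently large $n$ the quantity $n - m$ is positive, so that $m < n$ and the hypothesis of Theorem~\ref{thm:nonexistence} is met. Applying that theorem, the probability that an envy-free assignment exists is bounded above by $1/g^{n-m}$.

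Next I would argue that this bound vanishes. Recall from the preliminaries that the number of groups satisfies $g \geq 2$, so $g^{n-m} \geq 2^{n-m}$, giving
\[
\Pr[\text{an envy-free assignment exists}] \leq \frac{1}{g^{n-m}} \leq 2^{-(n-m)}.
\]
Since $n - m \to \infty$, the right-hand side tends to $0$, which yields the claim. The only points requiring any care are the translation between the asymptotic notation $\omega(1)$ and the limit statement $n - m \to \infty$, and the observation that $g$ is bounded below by the absolute constant $2$, so that the base of the exponential does not degenerate; beyond this bookkeeping there is no substantive obstacle, as the corollary is an immediate consequence of the quantitative bound already established in Theorem~\ref{thm:nonexistence}.
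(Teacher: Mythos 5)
Your proof is correct and follows exactly the route the paper intends: Corollary~\ref{cor:nonexistence1} is stated as an immediate consequence of Theorem~\ref{thm:nonexistence}, and your argument (noting $n-m\to\infty$, invoking the bound $1/g^{n-m}$, and using $g\geq 2$ so the bound is at most $2^{-(n-m)}\to 0$) is precisely the bookkeeping the paper leaves implicit.
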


\begin{corollary}
\label{cor:nonexistence2}
Assume that [A1] holds. When $m<n$, the probability that there exists an envy-free assignment converges to zero as $g\rightarrow\infty$.
\end{corollary}

%\begin{theorem}
%\label{thm:nonexistence1}
%If $m=n-\omega(1)$, then there exists an envy-free assignment with low probability as $n\rightarrow\infty$.
%\end{theorem}

%\begin{theorem}
%\label{thm:nonexistence2}
%If $m<n$, then there exists an envy-free assignment with low probability as $g\rightarrow\infty$.
%\end{theorem}

% COMMENT: it is unclear to me what ``low probability'' is for the old problem statement. Please feel free to change it the way you like.

\section{Truthful Mechanism for Approximate Envy-Freeness}
\label{sec:existenceapprox}

While the algorithms in Section \ref{sec:asymptotic} translate to mechanisms that yield with high probability envy-free divisions that are compatible with social welfare assuming that players are truth-telling, the resulting mechanisms suffer from the setback that they are easily manipulable. Indeed, since they aim to maximize (total or average) welfare, strategic players will declare their values for the items to be high, regardless of what the actual values are. This presents a significant disadvantage: Implementing these mechanisms in most practical situations, where we do not know the true valuations of the players and have no reason to assume that they will reveal their valuations in a honest manner, can lead to potentially undesirable outcomes.

In this section, we work with the weaker notion of approximate envy-freeness and show that a simple truthful mechanism yields an approximately envy-free assignment with high probability. In particular, we prove that the random assignment mechanism, which assigns each item to a player chosen uniformly and independently at random, is likely to produce such an assignment. In the setting where each group consists of only one player, Amanatidis et al. \cite{AmanatidisBM16} showed that when the distribution is as above and the number of items $m$ is large enough compared to $n$, the random assignment mechanism yields an approximately envy-free assignment with high probability. Our statement is an analogous statement for the case where each group can have multiple players.

\begin{theorem} \label{thm:existenceapprox}
Assume that [A2] holds. For every $\alpha \in [0, 1)$, there exists a constant $C$ depending only on $\alpha$ and $\mu_{min}$ such that, if $m > C g \log n$, then the random assignment, where each item $j \in M$ is assigned independently and uniformly at random to a group, is $\alpha$-approximate envy-free with high probability.
\end{theorem}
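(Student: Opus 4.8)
The plan is to reduce everything to the Chernoff bound (Lemma~\ref{lem:chernoff}) together with a union bound, exploiting the fact that under the random assignment every group has the \emph{same} expected value from the point of view of any fixed player. First I would record that, since item $j$ is placed in group $k$ with probability $1/g$ independently across items (equivalently, assigning each item to a uniformly random player lands it in a given group with probability $n'/n=1/g$), the bundle value $W_{i,k} := u_i(M_k) = \sum_{j\in M} u_i(j)\,\mathbf{1}[j\to k]$ is a sum of independent summands $u_i(j)\,\mathbf{1}[j\to k]\in[0,1]$ (both the valuations and the placements are independent across items). The decisive observation is that, because the placement probability is $1/g$ regardless of $k$, the mean $\mu_i := \E[W_{i,k}] = \frac{1}{g}\sum_{j\in M}\E[u_i(j)]$ is the \emph{same} for every group $k$, and assumption [A2] gives $\mu_i \geq m\mu_{min}/g$.

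Next I would convert the multiplicative slack from $\alpha<1$ into a single concentration radius. Since $\alpha=0$ makes the statement trivial, assume $\alpha\in(0,1)$ and set $\delta := \frac{1-\alpha}{1+\alpha}\in(0,1)$, so that $1-\delta = \alpha(1+\delta)$. It then suffices to show that, simultaneously for all players $i$ and all groups $p$, one has $W_{i,g(i)} > (1-\delta)\mu_i$ and $W_{i,p} < (1+\delta)\mu_i$: indeed these two inequalities yield $W_{i,g(i)} > (1-\delta)\mu_i = \alpha(1+\delta)\mu_i > \alpha W_{i,p}$, which is exactly $M_{g(i)} \succsim_i^\alpha M_p$ (the case $p=g(i)$ being vacuous as $\alpha\le 1$). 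Each failure event is handled by Lemma~\ref{lem:chernoff}: $\Pr[W_{i,g(i)} \leq (1-\delta)\mu_i] \leq \exp(-\delta^2\mu_i/2)$ and $\Pr[W_{i,p} \geq (1+\delta)\mu_i] \leq \exp(-\delta^2\mu_i/3)$, so a fixed pair $(i,p)$ fails with probability at most $2\exp(-\delta^2 m\mu_{min}/(3g))$.

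Finally I would union-bound over the at most $ng$ pairs $(i,p)$. Under $m > Cg\log n$ the exponent obeys $\delta^2 m\mu_{min}/(3g) > (\delta^2 C\mu_{min}/3)\log n$, and since $g\le n$ the total failure probability is at most $2ng\exp(-\delta^2 m\mu_{min}/(3g)) \leq 2\,n^{2-\delta^2 C\mu_{min}/3}$, which tends to $0$ once $C$ is large enough that $\delta^2 C\mu_{min}/3 > 2$. Because $\delta$ depends only on $\alpha$, the resulting $C$ depends only on $\alpha$ and $\mu_{min}$, as claimed.

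I expect no serious obstacle here: the argument is essentially a concentration-plus-union-bound computation. The one point that must be identified correctly is that all groups share the identical mean $\mu_i$ for player $i$, since this is precisely what lets a single radius $\delta=\frac{1-\alpha}{1+\alpha}$ simultaneously govern the player's own bundle and every rival bundle. The only remaining care is bookkeeping the factor $g$ in the union bound (absorbed via $g\le n$) and checking that the bound $\mu_i\ge m\mu_{min}/g$ — the sole use of [A2] — makes the exponent scale like $\log n$ when $m=\Omega(g\log n)$.
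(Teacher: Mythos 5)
Your proposal is correct and follows essentially the same route as the paper's own proof: the same decomposition of bundle values into independent $[0,1]$-valued summands, the same choice $\delta = \frac{1-\alpha}{1+\alpha}$ exploiting the equality of the means $\E[u_i(M_p)]$ across groups, the same Chernoff-plus-union-bound computation, and the same use of [A2] to lower-bound the mean by $m\mu_{min}/g$. The only differences are cosmetic (the paper keeps the failure probability as $\exp(-\Omega(m/g))$ rather than $n^{2-\delta^2 C\mu_{min}/3}$, and its constant is $10/(\mu_{min}\delta^2)$ versus your $6/(\mu_{min}\delta^2)$).
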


Before we prove Theorem~\ref{thm:existenceapprox}, we note some ways in which our result is stronger than that of Amanatidis et al.'s apart from the fact that multiple players per group are allowed in our setting. First, Amanatidis et al. required $\mathcal{D}_{i, j}$ to be the same for all $j$, which we do not assume here. Next, they only showed that the random assignment is likely to be \emph{approximately proportional}, a weaker notion that is implied by approximate envy-freeness. Moreover, in their result, $m$ needs to be as high as $\Omega(n^2)$, whereas in our case it suffices for $m$ to be in the range $\Omega(g \log n)$. Finally, we also derive a stronger probabilistic bound; they showed a ``success probability'' of the algorithm of $1 - O(n^2/m)$, while our success probability is $1 - \exp(-\Omega(m/g))$.

\begin{proof}[Proof of Theorem~\ref{thm:existenceapprox}]
For each player $i \in N$, each item $j \in M$ and each $p \in \{1, \dots, g\}$, let $A_{i, j}^p$ be a random variable representing the contribution of item $j$'s utility with respect to player $i$ to group $G_p$, i.e., $A_{i, j}^p$ is $u_i(j)$ if item $j$ is assigned to group $G_p$ and is zero otherwise.

Define $S_i^p:=\sum_{j \in M} A_{i, j}^p$. Observe that each player $i$ considers the assignment to be $\alpha$-approximate envy-free if and only if $S_i^{g(i)} \geq \alpha S_i^p$ for every $p$. Let $\delta = \frac{1 - \alpha}{1 + \alpha}$; from this choice of $\delta$ and since $\E[S_i^p]$ is equal for every $p$, we can conclude that $S_i^{g(i)} \geq \alpha S_i^p$ is implied by $S_i^{g(i)} \geq (1 - \delta) \E[S_i^{g(i)}]$ and $S_i^p \leq (1 + \delta)\E[S_i^p]$.
In other words, we can bound the probability that the random assignment is not $\alpha$-approximate envy-free as follows.
\begin{align*}
&\Pr[\exists i \in N, p \in \{1, \dots, g\}: S_i^{g(i)} < \alpha S_i^p] \\
&\phantom{{}=6}\leq \sum_{i \in N, p \in \{1, \dots, g\}} \Pr[S_i^{g(i)} < \alpha S_i^p] \\
&\phantom{{}=6}\leq \sum_{i \in N, p \in \{1, \dots, g\}} \Pr[S_i^{g(i)} < (1 - \delta) \E[S_i^{g(i)}]  \vee S_i^p > (1 + \delta)\E[S_i^{g(i)}]] \\
&\phantom{{}=6}\leq \sum_{i \in N, p \in \{1, \dots, g\}} (\Pr[S_i^{g(i)} < (1 - \delta) \E[S_i^{g(i)}]] + \Pr[S_i^p > (1 + \delta)\E[S_i^{p}]]). 
\end{align*}

Since $S_i^p = \sum_{j \in M} A_{i, j}^p$ and $A_{i, j}^p$'s are independent and lie in $[0, 1]$, we can use Chernoff bound (Lemma \ref{lem:chernoff}) to upper bound the last terms. Hence, the probability that the alloaction is not $\alpha$-approximate envy-free is at most
\begin{align*}
\sum_{i \in N, p \in \{1, \cdots, g\}} \exp\left(\frac{-\delta^2 \E[S_i^{g(i)}]]}{2}\right) + \exp\left(\frac{-\delta^2 \E[S_i^p]]}{3}\right).
\end{align*}

Finally, observe that \[\E[S_i^p] = \sum_{j \in M} \E[A_{i, j}^p] = \sum_{j \in M} \frac{1}{g} \E[u_i(j)] \geq \frac{m \mu_{min}}{g}.\] This means that the desired probability is bounded above by
\begin{align*}
&\sum_{i \in N, p \in \{1, \dots, g\}} \exp\left(\frac{-\delta^2 m \mu_{min}}{2g}\right) + \exp\left(\frac{-\delta^2 m \mu_{min}}{3g}\right) \\
&\phantom{{}=6}\leq 2ng \exp\left(\frac{-\delta^2 m \mu_{min}}{3g}\right) \\
&\phantom{{}=6}\leq \exp\left(-\frac{\delta^2 m \mu_{min}}{3g} + 3\log n\right).
\end{align*}

When $m > \left(\frac{10}{\mu_{min}\delta^2}\right)g \log n$, the above expression is at most $\exp(-\Omega(m/g))$, concluding our proof.
\end{proof}

\section{Concluding Remarks}

In this paper, we study a generalized setting for fair division that allows interested parties to contain multiple players, possibly with highly differing preferences. This setting allows us to model several real-world cases of fair division that cannot be done under the traditional setting. We establish almost-tight bounds on the number of players and items under which a fair division is likely or unlikely to exist. Furthermore, we consider the issue of truthfulness and show that a simple truthful mechanism produces an assignment that is approximately envy-free with high probability.

While the assumptions of additivity and independence are somewhat restrictive and might not apply fully to settings in the real world, our results give indications as to what we can expect if the assumptions are relaxed, such as if a certain degree of dependence is introduced. An interesting future direction is to generalize the results to settings with more general valuations. In particular, if the utility functions are low-degree polynomials, then one could try applying the invariance principle \cite{MosselOdOl10}, which is a generalization of the Berry-Esseen theorem that we use. 

We end the paper with some questions that remain after this work. A natural question is whether we can generalize our existence and non-existence results (Theorems \ref{thm:existence} and \ref{thm:nonexistence}) to the setting where the groups do not contain the same number of players. This non-symmetry between the groups seems to complicate the approaches that we use in this paper. For example, it breaks the greedy algorithm used in Theorem \ref{thm:existence}. Nevertheless, it might still be possible to prove existence of an envy-free division using other algorithms or without relying on a specific algorithm.

Another direction for future research is to invent procedures for computing envy-free divisions, whenever such divisions exist, for the general setting where each group contains multiple players and players have arbitrary (not necessarily additive) preferences. Even procedures that only depend on rankings of single items \cite{BramsKiKl14} do not appear to extend easily to this setting. Indeed, if a group contains two players whose preferences are opposite of each other, it is not immediately clear what we should assign to the group. It would be useful to have a procedure that produces a desirable outcome, even for a small number of players in each group.

Lastly, one could explore the limitations that arise when we impose the condition of truthfulness, an important property when we implement the mechanisms in practice. For instance, truthful allocation mechanisms have recently been characterized in the case of two players \cite{AmanatidisBiCh17}, and it has been shown that there is a separation between truthful and non-truthful mechanisms for approximating maximin shares \cite{AmanatidisBM16}. In our setting, a negative result on the existence of a truthful mechanism that yields an envy-free division with high probability would provide such a separation as well, while a positive result in this direction would have even more consequences for practical applications.

\section*{Acknowledgments}

The authors thank the anonymous reviewers for their helpful feedback. Warut Suksompong acknowledges support from a Stanford Graduate Fellowship.

\bibliographystyle{abbrv}
\bibliography{main}

\appendix

\section{Appendix}

\subsection{Proof of Theorem \ref{thm:existence}}

First we list the following well-known fact, which allows us to easily determine the mean of a random variable from its cumulative density function.

\begin{proposition}
\label{prop:meanbymass}
Let $X$ be a non-negative random variable. Then
\begin{align*}
\E[X] = \int_0^\infty \Pr[X \geq x] dx.
\end{align*}
\end{proposition}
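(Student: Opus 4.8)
The plan is to prove this identity by rewriting $X$ itself as an integral of indicator functions and then swapping the order of integration. The starting point is the elementary pointwise (``layer cake'') identity that, for any fixed outcome $\omega$ with $X(\omega) \geq 0$,
\begin{align*}
X(\omega) = \int_0^{X(\omega)} 1 \, dx = \int_0^\infty \mathbf{1}[X(\omega) \geq x] \, dx,
\end{align*}
where $\mathbf{1}[\cdot]$ denotes the indicator. This holds simply because $\mathbf{1}[X(\omega) \geq x]$ equals $1$ exactly on the interval $x \in [0, X(\omega)]$ and vanishes for larger $x$, so its integral in $x$ recovers the length $X(\omega)$ of that interval.

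Next I would take expectations of both sides and interchange the expectation with the integral over $x$, using that $\E[\mathbf{1}[X \geq x]] = \Pr[X \geq x]$. Concretely,
\begin{align*}
\E[X] = \E\left[\int_0^\infty \mathbf{1}[X \geq x] \, dx\right] = \int_0^\infty \E[\mathbf{1}[X \geq x]] \, dx = \int_0^\infty \Pr[X \geq x] \, dx,
\end{align*}
which is exactly the claimed formula. Everything apart from the middle interchange is a direct identity, so there is essentially no computation to carry out.

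The only step that genuinely requires justification, and hence the main obstacle, is the swap of the expectation and the integral over $x$. Since the integrand $\mathbf{1}[X \geq x]$ is non-negative and jointly measurable in the outcome and in $x$, this interchange is licensed by Tonelli's theorem, and it holds uniformly even when $\E[X] = \infty$ (both sides are then $+\infty$). An alternative route would be to integrate by parts against the cumulative distribution function $F$, writing $\int_0^\infty (1 - F(x)) \, dx$ and invoking $\lim_{x \to \infty} x(1 - F(x)) = 0$ when $\E[X] < \infty$; however, the layer-cake argument above is cleaner and avoids any finiteness caveats, so I would present that one.
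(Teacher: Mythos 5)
Your proof is correct. Note that the paper itself offers no proof of this proposition at all: it is introduced in the appendix as a ``well-known fact'' and used as a black box to pass from cumulative distribution functions to expectations, so there is no argument of the paper's to compare against. Your layer-cake argument --- writing $X = \int_0^\infty \mathbf{1}[X \geq x]\, dx$ pointwise and interchanging the expectation with the integral via Tonelli's theorem --- is the standard way to establish it, and your justification of the only nontrivial step (non-negativity and joint measurability of the integrand license the swap, with both sides possibly $+\infty$) is exactly right. Your remark that this route avoids the finiteness caveats of the integration-by-parts alternative is also apt, though in the paper's application the random variable in question is a sum of $n'$ utilities bounded in $[0,1]$, so finiteness would never be an issue there.
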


To analyze the algorithm, consider any player $i$ and any group $g' \ne g(i)$. We will first bound the probability that $M_{g'} \succ_i M_{g(i)}$. To do this, for each item $j \in M$, define $A_{i, j}$ as
\begin{align*}
A_{i, j} =
u_i(j)\cdot\textbf{1}\left[g(i) = \argmax_{k = 1, \dots, g} \sum_{p \in G_k} u_p(j)\right],
\end{align*}
where $1\left[g(i) = \argmax_{k = 1, \dots, g} \sum_{p \in G_k} u_p(j)\right]$ is an indicator random variable that indicates whether $g(i) = \argmax_{k = 1, \dots, g} \sum_{p \in G_k} u_p(j)$.
Similarly, define $B_{i, j}^{g'}$ as
\begin{align*}
B_{i, j}^{g'} =
u_i(j)\cdot\textbf{1}\left[g' = \argmax_{k = 1, \dots, g} \sum_{p \in G_k} u_p(j)\right].
\end{align*}
Moreover, suppose that $\mathcal{D}_j$ has mean $\mu_j$ and variance $\sigma_j^2$.

Notice that, with respect to player $i$, $A_{i, j}$ is the utility that item $j$ contributes to $g(i)$ whereas $B^{g'}_{i, j}$ is the utility that item $j$ contributes to $g'$. In other words, $M_{g'} \succ_i M_{g(i)}$ if and only if $\sum_{j \in M} A_{i, j} < \sum_{j \in M} B_{i, j}^{g'}$. To bound $\Pr\left[M_{g'} \succ_i M_{g(i)}\right]$, we will first bound $\E\left[A_{i, j}\right]$ and $\E\left[B_{i, j}^{g'}\right]$. Then, we will use the Chernoff bound to bound $\Pr\left[\sum_{j \in M} A_{i, j} < \sum_{j \in M} B_{i, j}^{g'}\right]$.

Observe that, due to symmetry, we can conclude that
\begin{align*}
\E\left[u_i(j)\cdot\textbf{1}\left[g' = \argmax_{k = 1, \dots, g} \sum_{p \in G_k} u_p(j)\right]\right] = \E\left[u_i(j)\cdot\textbf{1}\left[g'' = \argmax_{k = 1, \dots, g} \sum_{p \in G_k} u_p(j)\right]\right]
\end{align*}
for any $g'' \ne g(i)$. Thus, we can now rearrange $B_{i, j}^{g'}$ as follows:
\begin{align*}
\E\left[B_{i, j}^{g'}\right] &= \frac{1}{g - 1} \left(\sum_{g'' \ne g(i)} \E\left[u_i(j)\cdot\textbf{1}\left[g'' = \argmax_{k = 1, \dots, g} \sum_{p \in G_k} u_p(j)\right]\right]\right) \\
&= \frac{1}{g - 1} \left(\E\left[u_i(j) \sum_{g'' \ne g(i)}\textbf{1}\left[g'' = \argmax_{k = 1, \dots, g} \sum_{p \in G_k} u_p(j)\right]\right]\right) \\
&= \frac{1}{g - 1} \left(\E\left[u_i(j) \left(1 - \textbf{1}\left[g(i) = \argmax_{k = 1, \dots, g} \sum_{p \in G_k} u_p(j)\right]\right)\right]\right).
\end{align*}

Hence, we have
\begin{align}
\label{eq:AtoB}
\E\left[B_{i, j}^{g'}\right] = \frac{1}{g - 1} \left(\mu_j - \E\left[A_{i, j}\right]\right).
\end{align}

Now, consider $A_{i, j}$. Again, due to symmetry, we have
\begin{align*}
\E\left[A_{i, j}\right] &= \frac{1}{n'} \left(\sum_{i' \in G_{g(i)}} \E\left[u_{i'}(j)\cdot\textbf{1}\left[g(i) = \argmax_{k = 1, \dots, g} \sum_{p \in G_k} u_p(j)\right]\right]\right) \\
&= \frac{1}{n'} \E\left[\left(\sum_{i' \in G_{g(i)}} u_{i'}(j)\right)\cdot\textbf{1}\left[g(i) = \argmax_{k = 1, \dots, g} \sum_{p \in G_k} u_p(j)\right]\right].
\end{align*}

Let $\mathcal{S}$ denote the distribution of the sum of $n'$ independent random variables, each drawn from $\mathcal{D}_j$. It is obvious that $\sum_{p \in G_k} u_p(j)$ is drawn from $\mathcal{S}$ independently for each $k$. In other words, $\E\left[A_{i, j}\right]$ can be written as
\begin{align*}
\E\left[A_{i, j}\right] &= \frac{1}{n'} \E\left[X_1 \cdot\textbf{1}\left[X_1 = \max\{X_1, \dots, X_g\}\right]\right].
\end{align*}
The expectation on the right is taken over $X_1, \dots, X_g$ sampled independently from $\mathcal{S}$.

From symmetry among $X_1, \dots, X_g$, we can further derive the following:
\begin{align*}
\E\left[A_{i, j}\right] &= \frac{1}{n'} \Pr\left[X_1 = \max\{X_1, \dots, X_g\}\right] \E\left[X_1 \mid X_1 = \max\{X_1, \dots, X_g\}\right] \\
&= \frac{1}{n'g} \E\left[X_1 \mid X_1 = \max\{X_1, \dots, X_g\}\right] \\
&= \frac{1}{n'g} \E\left[\max\{X_1, \dots, X_g\}\right].
\end{align*}

Consider the distribution of $\max\{X_1, \dots, X_g\}$. Let us call this distribution $\mathcal{Y}$. Notice that $\E\left[\max\{X_1, \dots, X_g\}\right]$ is just the mean of $\mathcal{Y}$, i.e.,
\begin{align}
\label{eq:meanA}
\E\left[A_{i, j}\right] = \frac{1}{n' g} \E_{Y \sim \mathcal{Y}}[Y].
\end{align}

To bound this, let $F_S$ and $F_Y$ be the cumulative density functions of $\mathcal{S}$ and $\mathcal{Y}$ respectively. Notice that $F_Y(x) = F_S(x)^g$ for all $x$. Applying Proposition~\ref{prop:meanbymass} to $\mathcal{S}$ and $\mathcal{Y}$ yields the following:
\begin{align*}
\E_{S \sim \mathcal{S}}[S] = \int_0^\infty (1 - F_S(x))dx,
\end{align*}
and,
\begin{align*}
\E_{Y \sim \mathcal{Y}}[Y] = \int_0^\infty (1 - F_S(x)^g)dx.
\end{align*}

By taking the difference of the two, we have
\begin{align*}
\E_{Y \sim \mathcal{Y}}[Y]  = \E_{S \sim \mathcal{S}}[S] + \int_0^\infty F_S(x)\left(1 - F_S(x)^{g - 1}\right)dx.
\end{align*}

To bound the right hand side, recall that $\mathcal{S}$ is just the distribution of the sum of $n'$ independent random variables sampled according to $\mathcal{D}_j$. Note that the third moment of $\mathcal{D}_j$ is at most 1 because it is bounded in $[0, 1]$. Thus, by applying the Berry-Esseen Theorem (Lemma 2), we have
\begin{align*}
\left|F_S(x) - \Pr_{y \sim \mathcal{N}(\mu_j n', \sigma_j^2 n')}[y \leq x]\right| \leq \frac{C_{BE}}{\sigma_j^3 \sqrt{n'}}.
\end{align*}
for all $x \in \mathbb{R}$. When $n'$ is sufficiently large, the right hand side is at most 0.1.

Moreover, it is easy to check that $\Pr_{y \sim \mathcal{N}(\mu_j n', \sigma_j^2 n')}[y \leq x] \in [0.5, 0.85]$
for every $x \in \left[\mu_j n', \mu_j n' + \sigma_j \sqrt{n'}\right]$. Hence, $F_S(x) \in [0.4, 0.95]$ for every $x \in \left[\mu_j n', \mu_j n' + \sigma_j \sqrt{n'}\right]$.

Now, we can bound $\E_{Y \sim \mathcal{Y}}[Y]$ as follows:
\begin{align*}
\E_{Y \sim \mathcal{Y}}[Y]  &= \E_{S \sim \mathcal{S}}[S] + \int_0^\infty F_S(x)\left(1 - F_S(x)^{g - 1}\right)dx \\
&= \mu_j n' + \int_0^\infty F_S(x)\left(1 - F_S(x)^{g - 1}\right)dx \\
(\text{Since } F_S(x)\left(1 - F_S(x)^{g - 1}\right) \geq 0) &\geq \mu_j n' + \int_{\mu_j n'}^{\mu_j n' + \sigma_j \sqrt{n'}} F_S(x)\left(1 - F_S(x)^{g - 1}\right)dx \\
&\geq \mu_j n' + \int_{\mu_j n'}^{\mu_j n' + \sigma_j \sqrt{n'}} (0.4)(0.05)dx \\
&= \mu_j n' + \sigma_j \sqrt{n'}/50 \\
(\text{Since } \sigma_j \geq \sigma_{min}) &\geq \mu_j n' + \sigma_{min} \sqrt{n'}/50.
\end{align*}

Plugging the above inequality into equation~(\ref{eq:meanA}), we can conclude that
\begin{align*}
\E\left[A_{i, j}\right] = \frac{1}{n'g} \E_{Y \in \mathcal{Y}}[Y] \geq \frac{\mu_j}{g} + \frac{\sigma_{min}}{50 g \sqrt{n'}}.
\end{align*}

From this and equation~(\ref{eq:AtoB}), we have
\begin{align*}
\E\left[B_{i, j}^{g'}\right] = \frac{1}{g - 1} \left(\mu_j - \E\left[A_{i, j}\right]\right)
\leq \frac{1}{g - 1} \left(\mu_j - \frac{\mu_j}{g}\right)
= \frac{\mu_j}{g}.
\end{align*}

Now, define $C_{i, j}^{g'}$ as $C_{i, j}^{g'} = B_{i, j}^{g'} + \left(\mu_j/g - \E\left[B_{i, j}^{g'}\right]\right)$. Notice $\E\left[C_{i, j}^{g'}\right] = \mu_j/g$.

As stated earlier, $M_{g'} \succ_i M_{g(i)}$ if and only if $\sum_{j \in M} A_{i, j} < \sum_{j \in M} B_{i, j}^{g'}$. Let $S_A = \sum_{j \in M} A_{i, j}, S_B = \sum_{j \in M} B_{i, j}^{g'}, S_C = \sum_{j \in M} C_{i, j}^{g'}$ and let $\delta = \frac{\sigma_{min}}{200\mu_j \sqrt{n'}}$. Notice that, since we assume that the variance of $\mathcal{D}_j$ is positive, $\mu_j$ is also non-zero, which means that $\delta$ is well-defined. Using Chernoff bound (Lemma 1) on $S_A$ and $S_C$, we have
\begin{align*}
\Pr[S_A \leq (1 - \delta)\E[S_A]] &\leq \exp{\left(\frac{-\delta^2\E[S_A]}{2}\right)},
\end{align*}
and,
\begin{align*}
\Pr[S_C \geq (1 + \delta)\E[S_C]] &\leq \exp{\left(\frac{-\delta^2\E[S_C]}{3}\right)}.
\end{align*}

Moreover, when $n'$ is large enough, we have $(1 - \delta)\E[S_A] \geq (1 + \delta)\E[S_C]$. Thus, we have
\begin{align*}
\Pr[S_A < S_C] &\leq \exp{\left(\frac{-\delta^2\E[S_A]}{2}\right)} + \exp{\left(\frac{-\delta^2\E[S_C]}{3}\right)} \\
&\leq \exp{\left(\frac{-\delta^2 m \mu_j}{2g}\right)} + \exp{\left(\frac{-\delta^2m \mu_j}{3g}\right)} \\
&\leq 2\exp{\left(\frac{-\sigma_{min}^2 m }{120000g n' \mu_j}\right)} \\
(\text{Since } \mu_j \leq 1) &\leq 2\exp{\left(\frac{-\sigma_{min}^2 m }{120000n}\right)}.
\end{align*}

Due to how $C_{i, j}^{g'}$ is defined, we have $\Pr[S_A < S_C] \geq \Pr[S_A < S_B] = \Pr[ M_{g'}\succ_i M_{g(i)}].$ Using the union bound for all $i$ and all $g' \ne g(i)$, the probability that the assignment output by the algorithm is not envy-free is at most
\begin{align*}
2n(g - 1)\exp{\left(\frac{-\sigma_{min}^2 m }{120000n}\right)},
\end{align*}
which is at most $1/m$ when $m \geq C n \log n$ for some sufficiently large $C$. This completes the proof for the theorem.

\end{document}